\title{Fairest Neighbors}
\author{Magnus Lie Hetland\orcidID{0000-0003-4204-2017} \and
Halvard Hummel\orcidID{0000-0001-5691-8177}}
\institute{Norwegian University of Science and Technology, Trondheim, Norway
\email{\{mlh,halvard.hummel\}@ntnu.no}}
\authorrunning{M. L. Hetland and H. Hummel}
\begin{document}
\maketitle

\begin{abstract}
Metric search commonly involves finding objects similar to a given sample
object.
We explore a generalization, where the desired result is a fair tradeoff between
multiple query objects. This builds on previous results on complex queries, such
as linear combinations. We instead use measures of inequality, like ordered
weighted averages, and query existing index structures to find objects that
minimize these. We compare our method empirically to linear scan and a post hoc
combination of individual queries, and demonstrate a considerable speedup.
\keywords{Metric indexing \and Multicriteria decisions \and Fairness}
\end{abstract}

\section{Introduction}

From the early days,
indexing metric spaces has mainly been in service of straightforward
similarity search: Given some query object $q$, find other objects $o$ for which
the distance $d(q,o)$ is low---either all points within some search radius, or a
certain number of the nearest neighbors. Alternative forms of search have been
explored, certainly.
Of particular interest to us is using multiple query objects $q_i$, without
restricting the indexing methods used. That is, we wish to take any existing
metric index, already constructed, and execute a combination query on it. Such a
query may be specified directly by the user, or it may be a form of interactive
refinement. A user first performs a query using a single object. Then she
indicates which of the returned objects are most relevant (possibly to varying
degrees), and these are then used as a second, combined query. The result should
ideally be a tradeoff between the query objects. In particular, we wish to
ensure that it is a \emph{fair} tradeoff, borrowing measures of fairness from
the field of multicriteria decision making.

\paragraph{Our contributions.}
In this short paper, we introduce the idea of \emph{fairest neighbors} ($k$FN),
i.e., items that are close to multiple query objects at once, as measured by
some kind of fairness measure. For example, if we are looking for a centaur,
using a human and a horse, a simple linear combination will not do, as the best
results are then just as likely to be similar to just the human or just the
horse; only a fair combination would give us what we want.
We formulate such queries in the context of the complex queries of
\citet{Ciaccia:1998}, but extend the formalism by applying linear ambit
overlap~\citep{Hetland:2019} to ordered weighted averages (OWA) and weighted
OWA, for improved bounds. The resulting queries may be resolved using existing
metric index structures \emph{without modification}. We perform preliminary
experimental feasibility tests, showing that such combined $k$FN queries
outperform both linear scan and using multiple separate $k$NN queries.

\paragraph{Related work.}
Others have studied metric search with multiple simultaneous criteria.
As discussed by \citet{Ciaccia:1998}, Fagin's $\mathcal{A}_0$ algorithm also
resolves complex queries, but makes additional assumptions about synchronized
independent subsystems. \Citet{Bustos:2006} study a superficially similar
problem that involves a linear combination of multiple metrics, while still
using a single query object. More closely related are \emph{metric skylines},
which are essentially Pareto frontiers in pivot space~\citep{Chen:2008}. These
result sets will be diverse, and will tend to include both fair and unfair
solutions. Our approach moves beyond non-Pareto-dominated solutions to
non-Lorenz-dominated solutions~\citep[cf.][]{Gonzales:2020}.

\section{Complex Queries as Multicriteria Decisions}
\label{sec:complex}

In \citeyear{Ciaccia:1998}, \citeauthor{Ciaccia:1998} introduced a formalism for
dealing with what they called \emph{complex queries} in metric indexes---queries
involving multiple query objects, along with some domain-specific query
\emph{language}, specifying which objects are relevant and which
aren't~\citep{Ciaccia:1998}. Part of their formalism involves mapping distances
to similarity measures, which are then constrained by some query predicate;
however, the core ideas apply equally well to distances directly. A central
insight is that \emph{monotone} predicates may be used not just to detect
whether an \emph{object} is relevant, but also whether certain \emph{regions}
might contain relevant objects.

Let $x=[d(q_i,o)]_{i=1}^m$ be a vector of
distances between query objects $q_i$ and some potentially relevant object $o$.
Relevance is then defined by some predicate on these distances, $\Pred(x)$. This
predicate is \emph{monotone} if for all $x\leq y$ we have that $\Pred(y)$
implies $\Pred(x)$. That is, if we start with the distance vector of a relevant
object, and we reduce one or more of the distances, the resulting vector should
\emph{also} be judged as relevant. In this case, using \emph{lower bounds} for
the individual distances is safe (i.e., it will not cause false negatives). So,
for example, if we know that $o$ is in a ball with center $c$ and radius $r$, we
can safely replace $[d(q_i,o)]_i$ with $[d(q_i,c)-r]_i$ and apply $\Pred$ to
determine whether or not to examine the ball.
Using this approach, one can find the $k$ best objects by maintaining a steadily
shrinking search radius encompassing the $k$ best candidates found so far, just
as one would for $k$NN.
The idea is illustrated in \cref{fig:lowerbound}: The vector $x-r$ of lower
bounds corresponds to the lower left corner of the square enveloping the region
in pivot space (indicated by a `$+$' in the right-hand subfigure). A monotone
query and a ball region may overlap only if this lower left point is inside the
query definition in this space~\citep[cf.][]{Hetland:2019}. Similarly, we may
conclude that the region is \emph{entirely inside} the query (and thus return
all its objects without further examination) if the upper right corner ($x+r$)
satisfies the query predicate.

\begin{figure}[t]
\input{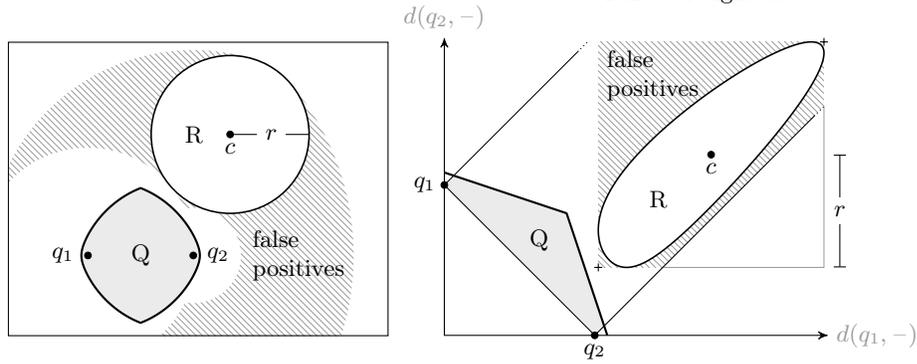}
\hfill
\begin{tikzpicture}
\begin{scope}
\clip (0,0) rectangle (\centerfocalx+\radius,\centerfocaly+\radius);
\node at (0.5*\centerfocalx+0.5*\radius,0.5*\centerfocaly+0.5*\radius) {%
\begin{tikzpicture}[scale=.7]

    % 1.2 pt / scale (0.7)
    \def\hatchsep{1.7143 pt}

    % Ambiguous region
    \begin{scope}[overlay, even odd rule]

    \clip (-1,0)
        circle[radius=\centerfocalx-\radius]
        circle[radius=\centerfocalx+\radius]
        ;
    \clip (+1,0)
        circle[radius=\centerfocaly-\radius]
        circle[radius=\centerfocaly+\radius]
        ;

    % Not using pattern, because that sometimes renders oddly.
    % (Hard-coded number of lines.)
    \foreach \x in {0,1,...,80} {
        \draw[overlay, % XXX
        very thin, secondary color]
        (-4,1) ++(\x*\hatchsep,\x*\hatchsep)
         -- +(6,-6);
    }

    \end{scope}
    \draw[thick, fill=light color] \giniambit;
    \draw[fill=white, semithick] (\centerx,\centery) circle[radius=\radius];

    \draw
        (-1,0) node[point] {} node[left=2pt] {$q_1$}
        (+1,0) node[point] {} node[right=2pt] {$q_2$}
        (\centerx,\centery) node[point] (p) {} node[below] {$c$}
        node[left=7pt] {$\mathrm R$}
        ;

    \draw (p) -- node[fill=white, midway] {$r$} +(\radius,0);

    \draw (0,0) node {$\mathrm Q$};

    \draw (3,0) node {\begin{varwidth}{5cm}
        false\\
        positives
    \end{varwidth}};

\end{tikzpicture}};
\end{scope}
\draw (0,0) rectangle (\centerfocalx+\radius,\centerfocaly+\radius);
\end{tikzpicture}
\hfill
\begin{tikzpicture}[scale=1]

    \begin{scope}
        \clip[overlay]
        (\centerfocaly+\radius-\sep,\centerfocaly+\radius)
        -- (0,\sep) -- (\sep,0) --
        (\centerfocalx+\radius,\centerfocalx+\radius-\sep)
        --
        (\centerfocalx+\radius,\centerfocalx+\radius)
        ;
    \clip[overlay]
        (\centerfocalx-\radius,\centerfocaly-\radius)
        rectangle
        +(2*\radius,2*\radius)
        ;
    % Not using pattern, because that sometimes renders oddly.
    % (Hard-coded number of lines.)
    \foreach \x in {0,1,...,71} {
        \draw[very thin, secondary color]
        (\centerfocalx-2*\radius,\centerfocaly) ++(\x*1.2 pt,\x*1.2 pt)
         -- +(2*\radius,-2*\radius);
    }
    \end{scope}

    \begin{scope}[even odd rule]
        \clip[overlay] (0,0) rectangle (10, 10)
        (\centerfocaly+10-\sep,\centerfocaly+10)
        -- (0,\sep) -- (\sep,0) --
        (\centerfocalx+10,\centerfocalx+10-\sep)
        --
        (\centerfocalx+10,\centerfocalx+10)
        ;
        \draw[very thin, secondary color]
            (\centerfocalx-\radius,\centerfocaly-\radius)
            rectangle
            (\centerfocalx+\radius,\centerfocaly+\radius)
            ;
    \end{scope}

    \draw[overlay, thin]
        (\centerfocalx-\radius,\centerfocaly-\radius)
            +(-1.5pt,0) -- +(1.5pt,0)
            +(0,-1.5pt) -- +(0,1.5pt)
        (\centerfocalx+\radius,\centerfocaly+\radius)
            +(-1.5pt,0) -- +(1.5pt,0)
            +(0,-1.5pt) -- +(0,1.5pt)
        ;

    \begin{scope}[overlay, clip]

        \clip
            (\sep,2*\sep) -- (0,\sep) -- (\sep,0)
            -- (2*\sep,\sep) -- (2*\sep,2*\sep)
            -- cycle
            ;

        \fill[fill=light color] \ginifocal -- (0,0) -- cycle;

    \end{scope}

    \draw[thick] \ginifocal;

    \draw[semithick, fill=white] \regionfocal -- cycle;

    \draw (1.26,1.26) node {$\mathrm Q$};

    \draw (\centerfocalx, \centerfocaly) node[point] (p) {}
        node[below] {$c$}
        +(220:26pt) node {$\mathrm R$}
        ;

    \draw[...-...]
        (\centerfocaly+\radius-\sep,\centerfocaly+\radius)
        -- (0,\sep) -- (\sep,0) --
        (\centerfocalx+\radius,\centerfocalx+\radius-\sep)
        ;

    \draw[overlay, <->]
        (0,\centerfocaly+\radius+.05) node[black!40, above] {$d(q_2,\blank)$} -- (0,0) --
        (\centerfocalx+\radius+.05,0) node[black!40, right] {$d(q_1,\blank)$}
        ;

        \draw[overlay]
        (0,\sep) node[point] {} node[left] {$q_1$}
        (\sep,0) node[point] {} node[below] {$q_2$}
        ;

        \draw (\centerfocalx-\radius,\centerfocaly+\radius)
        node[anchor=north west]
        {\begin{varwidth}{5cm}
        false\\
        positives
        \end{varwidth}};

        \draw[overlay, |-|] (p) ++(\radius,0) ++(6pt,0) -- node[midway,
        fill=white] {$r$} +(0,-\radius);

\end{tikzpicture}
\hfill\mbox{}
\caption{A complex query $\mathrm Q$ with two query objects $q_i$ and a ball
region $\mathrm R$. The right-hand subfigure shows the query and region in pivot
space, where the two axes correspond to distances from the two query objects.
The hatched areas could potentially fall within the ball $\mathrm R$, depending
on the nature of the metric; what remains outside $\mathrm R$ constitutes
potential false positives. (\Citeauthor{Hetland:2019} discusses these concepts
in depth~\citep{Hetland:2019}.)}
\label{fig:lowerbound}
\end{figure}

Two of the query types discussed explicitly by \citeauthor{Ciaccia:1998} are
based on fuzzy logic, and one uses a weighted sum. These permit indicating
degrees of relevance for the various query objects $q_i$, but may have
\emph{many equally good solutions}, with vastly different properties. What can
be done if we wish to enforce some form of actual tradeoff? Consider a query
predicate of the form $f(x)\leq s$. That is, we apply some monotone function $f$
to the vector $x$ of distance $d(q_i,o)$ and are only interested in objects $o$
for which $f(x)$ falls below some search radius $s$. Different monotone
functions $f$ may yield very different query regions:
\vspace{1ex}
\begin{center}
\hfill
\begin{tikzpicture}[font=\footnotesize]
    \draw
        (-.6,0) node[point] (L) {}
        (+.6,0) node[point] (R) {}
    ;
    \draw[overlay]
        (L) circle[radius=.25]
        (R) circle[radius=.25]
    ;
    \path (0,0) circle[radius=.25];
    \draw (0,-.6) node[inner sep=0] {\smash{$\min(x_1,x_2)$}};
\end{tikzpicture}
\hfill
\begin{tikzpicture}[font=\footnotesize]
    \draw
        (-.6,0) node[point] (L) {}
        (+.6,0) node[point] (R) {}
    ;
    \draw (L) edge (R);
    \path (0,0) circle[radius=.25];
    \draw (0,-.6) node[inner sep=0] {\smash{$x_1+x_2$}};
\end{tikzpicture}
\hfill
\begin{tikzpicture}[font=\footnotesize]
    \draw
        (-.6,0) node[point] (L) {}
        (+.6,0) node[point] (R) {}
    ;
    \draw (0,0) circle[radius=.25];
    \draw (0,-.6) node[inner sep=0] {\smash{$x_1^2+x_2^2$}};
\end{tikzpicture}
\hfill\mbox{}\vspace{-.75ex}
\end{center}
\vspace{.75ex}
Minimum (corresponding to maximum, or standard fuzzy disjunction, in the
similarity formalism of \citeauthor{Ciaccia:1998}) produces results that are
close to one or the other of the two query points, but not both. A sum gives us
points that can lie anywhere between the two (in general within an ellipsoid). A
sum of positive powers, however, produces items that are \emph{between} the
query points---ideally in the middle (i.e., in their metric midset). This is the
kind of query we want.

Using sums of powers to characterize tradeoffs is a common approach
in cardinal welfarism,
and it is one of a broader class of aggregation functions used in multicriteria
decision making~\citep{Gonzales:2020}.\footnote{Though \citeauthor{Ciaccia:1998}
do not directly address fairness or tradeoffs, their \emph{standard} and
\emph{algebraic fuzzy conjunctions}, correspond to the \emph{maximin} and
\emph{Nash welfare} fairness measures, respectively, if applied, in isolation,
to similarities~\citep{Ciaccia:1998}.} These are all generally monotonically
increasing, with the optimum found for some fair tradeoff between their
arguments. Applied to individual query distances $d(q_i,o)$, our measure will
of course need to be \emph{minimized}, and so must be an \emph{unfairness}
measure, rather than a \emph{fairness} measure.
In the following, we will focus on \emph{ordered weighted average} (OWA), and
its generalization, \emph{weighted} OWA. The OWA of some vector $x$ is based on
a weighting of the elements of $x$, just like a weighted average, except that
the weights are applied based on the rank of each element $x_i$. Given
a weight vector $w\geq 0$, summing to~1, the OWA of $x$ is
$wx^\uparrow$, where, $x^\uparrow$ is a sorted version of $x$.
As discussed in \cref{sec:linear} (in a more general setting), by ensuring that
$w$ is also sorted, we get an unfairness measure. Our overlap check with an
$r$-ball, using the complex query formalism, becomes:
\begin{equation}
    f(x-r) \leq s
    \iff
    wx^\uparrow - r \leq s\,.
\end{equation}
For some structures, such as VP-trees~\citep{Yianilos:1993},
LC~\citep{Chavez:2005}
and HC~\citep{Fredriksson:2007}, we also
need to determine whether the query is \emph{entirely
inside} a ball region---or, equivalently, whether it intersects with the
complement of the ball. Our lower bound on each distance between the query and
the outside is $r-x_i$, and using monotonicity again, we get the criterion
$r-wx^\uparrow < s$. If, however, we do not treat the query as a black-box
monotone function, we can, as described in the following section, get the
stronger criterion $r-wx^\downarrow < s$, where $x^\downarrow$ is $x$ sorted in
descending order. The difference between these two bounds can be
\emph{arbitrarily large}, even for just two query objects. The complemented ball
is also just a particularly simple linear ambit with negative
coefficients~\citep{Hetland:2019};
the situation is similar for other such regions.

\section{Ordered Weighted Averages and Linear Ambits}
\label{sec:linear}

It is possible to construct a weighted generalization of OWA, called
\emph{weighted} OWA (WOWA), where
some individuals (i.e., query objects) get preferential treatment
when determining a tradeoff~\citep{Torra:1997}.
The following definition is given by \citet{Gonzales:2020}.

\begin{definition}
    \label{def:wowa}
    Let $p = [p_1, \dots, p_m]$ and $w = [w_1, \dots, w_m]$ be weighting
    vectors, where $p_i, w_i \in [0, 1]$ and $\sum_{i = 1}^m p_i = \sum_{i =
    1}^m w_i = 1$. The \emph{weighted ordered weighted average (WOWA)} of
    a vector $x \in \mathbb{R}^m$ with respect to $p$ and $w$ is defined by:
    \begin{equation}
        \WOWA(x; p, w) = \sum_{i = 1}^m \left[\varphi\!\left(\sum_{k = i}^m
        p_{\sigma(k)\!}\!\right) - \varphi\!\left(\smashoperator[r]{\sum_{k = i +
        1}^m} p_{\sigma(k)\!}\!\right)\right]x_{\sigma(i)}\;,
    \end{equation}
    where $\sigma$ is a permutation of $x$ in increasing order and $\varphi :
    [0, 1] \to [0, 1]$ is defined by linear interpolation between
    values $\varphi(i / m) = \sum_{k = 1}^i w_{m - k + 1}$ and $\varphi(0) = 0$.
\end{definition}
With decreasing weights, WOWA is a fairness measure. This works well for
similarities, but as discussed, for distances we need need \emph{un}fairness.
One way of achieving this is to use an \emph{increasing} weight vector.
This makes intuitive sense, and for similarities $s(u, v) = 1 - d(u, v)$, as
used by \citet{Ciaccia:1998}, we can show that the least unfair distance
tradeoff is exactly the fairest similarity tradeoff.%
\footnote{Note that, following \citeauthor{Ciaccia:1998}, we assume $s(u,
v)\in [0, 1]$, which requires a bounded metric, with $d(u, v)\in [0, 1]$.}
\begin{proposition}
    \label{prop:equiv}
    Let $p$, $w$ and $w'$ be WOWA weighting vectors, with $w'_i = w_{m - i + 1}$
    for all $i \in \{1, \dots, m\}$. For any $x \in [0, 1]^m$, we have that:
    \begin{equation}\label{eq:prop-wowa}
        \WOWA(x; p, w') = 1 - \WOWA(1 - x; p, w)
    \end{equation}%
\end{proposition}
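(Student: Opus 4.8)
The crux of the argument is a single symmetry between the interpolation functions attached to $w$ and $w'$; everything else is bookkeeping. Let $\varphi$ and $\varphi'$ denote the functions from \cref{def:wowa} built from $w$ and $w'$, respectively. Since $w'_j = w_{m-j+1}$, we have $w'_{m-k+1} = w_k$, so $\varphi'(i/m) = \sum_{k=1}^i w_k$ whereas $\varphi(i/m) = \sum_{k=1}^i w_{m-k+1}$. These partial sums run over complementary blocks of indices, so $\varphi(i/m) + \varphi'\!\bigl((m-i)/m\bigr) = \sum_{k=1}^m w_k = 1$ at every breakpoint $i/m$. Because $\varphi'$ and $t \mapsto 1 - \varphi(1-t)$ are both piecewise linear on the common breakpoint set $\{0, 1/m, \dots, 1\}$ and coincide there, I would conclude
\begin{equation}\label{eq:phi-sym}
\varphi'(t) = 1 - \varphi(1-t) \qquad \text{for all } t \in [0,1],
\end{equation}
which also recovers the boundary conditions $\varphi'(0) = 1 - \varphi(1) = 0$ and $\varphi'(1) = 1$.

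Next I would line up the two sortings. If $\sigma$ puts $x$ in increasing order, then $\tau(i) = \sigma(m-i+1)$ puts $1-x$ in increasing order, and this remains valid in the presence of ties. Writing $P_i = \sum_{k=i}^m p_{\sigma(k)}$ for the tail sums of $p$ along $\sigma$ (so $P_1 = 1$ and $P_{m+1} = 0$), reversing the index of summation gives $\sum_{k=i}^m p_{\tau(k)} = \sum_{j=1}^{m-i+1} p_{\sigma(j)} = 1 - P_{m-i+2}$, and likewise $\sum_{k=i+1}^m p_{\tau(k)} = 1 - P_{m-i+1}$. Substituting these into the coefficient $\omega_i$ of $(1-x)_{\tau(i)}$ in $\WOWA(1-x; p, w)$ and applying \cref{eq:phi-sym} to each term turns $\varphi(1 - P_{\cdot})$ into $1 - \varphi'(P_{\cdot})$; the two constants cancel and I am left with $\omega_i = \varphi'(P_{m-i+1}) - \varphi'(P_{m-i+2})$, which is exactly the coefficient $\omega'_{m-i+1}$ of $x_{\sigma(m-i+1)}$ in $\WOWA(x; p, w')$.

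Finally I would assemble the identity. Using $\omega_i = \omega'_{m-i+1}$ and $(1-x)_{\tau(i)} = 1 - x_{\sigma(m-i+1)}$, the reindexing $\ell = m-i+1$ turns the defining sum into
\begin{equation}\label{eq:assemble}
\WOWA(1-x; p, w) = \sum_{\ell=1}^m \bigl[\varphi'(P_\ell) - \varphi'(P_{\ell+1})\bigr]\bigl(1 - x_{\sigma(\ell)}\bigr).
\end{equation}
The bracketed coefficients telescope, contributing $\varphi'(P_1) - \varphi'(P_{m+1}) = 1$ to the constant part, while the remaining part is precisely $-\WOWA(x; p, w')$. Rearranging \cref{eq:assemble} then yields \cref{eq:prop-wowa}.

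The main obstacle is the index bookkeeping in the middle step, where the permutation reversal, the complementary tail sums $1 - P_{m-i+2}$, and the shift of argument in $\varphi'$ all have to be tracked at once; a sign or off-by-one error here would break the cancellation. By contrast, \cref{eq:phi-sym} is the real conceptual content, and once it is established the rest follows by routine telescoping. One should also confirm that \cref{eq:phi-sym} is applied at genuine interior arguments $P_{\cdot} \in [0,1]$, not merely at the breakpoints used to derive it.
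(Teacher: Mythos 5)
Your proof is correct and takes essentially the same route as the paper's: the reflection symmetry $\varphi_{w'}(t) = 1 - \varphi_w(1-t)$, the reversed sorting permutation, and the complementary tail sums of $p$ are exactly the ingredients used there, with the constant $1$ extracted via the telescoping normalization of the coefficients. The only differences are cosmetic: you prove the $\varphi$ symmetry explicitly via piecewise linearity (the paper dismisses it as easily verified) and you pull out the constant at the end rather than at the start.
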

\begin{proof}
    Let $\varphi_w$ and $\varphi_{w'}$ be the function $\varphi$, as defined in
    \cref{def:wowa}, for $w$ and $w'$, respectively. Also, let $\sigma$
    and $\sigma'$ be permutations of, respectively, $1 - x$ and $x$ in
    increasing order so that $\sigma'(i) = \sigma(m - i + 1)$. We have that:
    \begin{equation}
        \WOWA(1 - x; p, w) = 1 - \sum_{i = 1}^m\left[\varphi_{w\!}\!\left(\sum_{k
        = i}^m p_{\sigma(k)\!}\!\right) -
        \varphi_{w\!}\!\left(\smashoperator[r]{\sum_{k = i +
        1}^m}p_{\sigma(k)\!}\!\right)\right]x_{\sigma(i)}
    \end{equation}
    One can easily verify that $\varphi_{w'}(b) - \varphi_{w'}(a) =
    \varphi_w(1 - a) - \varphi_w(1 - b)$ for $a, b \in [0, 1]$ and that $\sum_{k =
    i}^m p_{\sigma(k)} = 1 - \sum_{k = 1}^{i - 1} p_{\sigma(k)}$. Thus:
    \begin{align}
        \WOWA(1 - x; p, w)
            &= 1 - \sum_{i = 1}^m \left[\varphi_{w'\!}\!\left(\sum_{k = 1}^{i}
            p_{\sigma(k)\!}\!\right) - \varphi_{w'\!}\!\left(\sum_{k = 1}^{i - 1}
            p_{\sigma(k)\!}\!\right)\right]x_{\sigma(i)} \\
            &= 1 - \sum_{i = 1}^m \left[\varphi_{w'\!}\!\left(\sum_{k = i}^m
            p_{\sigma'(k)\!}\!\right) - \varphi_{w'\!}\!\left(\smashoperator[r]{\sum_{k = i + 1}^{m}}
            p_{\sigma'(k)\!}\!\right)\right]x_{\sigma'(i)} \\
            &= 1 - \WOWA(x; p, w')\label{eq:reverse-prop-wowa}
    \end{align}
    \Cref{eq:prop-wowa} can then easily be obtained from
    \eqref{eq:reverse-prop-wowa}.%
    \qed
\end{proof}
For our overlap check, we wish to model a WOWA query as a \emph{linear ambit}
$\Ball[q, s; \Wt] = \{o : \Wt x_o\leq s\}$, where
$x_o=[d(q_i,o)]_i$, as introduced by \citet{Hetland:2019}.
While WOWAs are not linear functions, we can emulate a query with $m$ query
objects as a linear ambit with $m!$ facets, one per possible permutation of $x$.
Normally, the intersection check would require considering each facet in turn,
which would quickly become computationally unfeasible with an increasing $m$,
and could in theory lead to false positives.\footnote{This is discussed by
Hetland in Sect.~3.1~\citep{Hetland:2019}.} However, when the weights for the
WOWA representing our unfairness measure are in increasing order (corresponding
to a fairness measure on similarities, per \cref{prop:equiv}),
membership and overlap checks need only consider \emph{one} of
the facets, eliminating both of these problems.
\begin{proposition}\label{prop:ambit-check}
    Let $w$ and $p$ be weighting vectors, where $w_1 \le w_2 \le \dots \le w_m$.
    Let $W$ be a matrix with $m!$ rows, one for each possible permutation,
    $\sigma$, of an $m$-vector. For a permutation $\sigma$, the value in column
    $i$ of the corresponding row is:
    \begin{equation}
        \varphi\!\left(\sum_{k = j}^m p_{\sigma(k)\!}\!\right) -
        \varphi\!\left(\smashoperator[r]{\sum_{k
        = j + 1}^m} p_{\sigma(k)\!}\!\right)\;,
    \end{equation}
    \noindent where $\sigma(j) = i$ and $\varphi$ is the function from
    \cref{def:wowa}. For $x \in \smash{\mathbb{R}_{\ge 0}^m}$ and $s \in \mathbb{R}$,
    let $w_{\sigma}$ be the row in $\Wt$ \kern-2pt corresponding to $\sigma$. If
    $\sigma$ puts $x$ in increasing order, $\Wt x \le s$ iff $w_{\sigma}x \le s$.
    If $\sigma$ puts $x$ in decreasing order, $\Wt x > s$ iff $w_{\sigma}x > s$.
\end{proposition}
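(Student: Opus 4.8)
The plan is to prove the two stated equivalences by showing that, among the $m!$ rows of $\Wt$, the inner product $w_\tau x$ is maximized precisely when the permutation $\tau$ sorts $x$ into increasing order and minimized precisely when $\tau$ sorts it into decreasing order. For a permutation $\tau$ write $w_\tau x = \sum_{j=1}^m \alpha_j^\tau\, x_{\tau(j)}$ with $\alpha_j^\tau = \varphi(\sum_{k=j}^m p_{\tau(k)}) - \varphi(\sum_{k=j+1}^m p_{\tau(k)})$; comparing with \cref{def:wowa}, the row of the increasing-sort permutation reproduces $\WOWA(x; p, w)$ exactly. The condition $\Wt x \le s$ is the conjunction of $w_\tau x \le s$ over all rows, hence equivalent to $\max_\tau w_\tau x \le s$, and $\Wt x > s$ is likewise equivalent to $\min_\tau w_\tau x > s$. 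So once the extremal claims are in hand, the equivalences follow at once for the increasing- and decreasing-sort permutations, respectively.

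First I would record the structural fact that drives the argument: since $w_1 \le \dots \le w_m$, the interpolant $\varphi$ is concave. On the grid its successive increments are $\varphi(i/m) - \varphi((i-1)/m) = w_{m-i+1}$, which are non-increasing in $i$, and a piecewise-linear function with non-increasing slopes is concave.

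The core is an exchange step comparing $\tau$ with the permutation $\tau'$ that swaps two adjacent positions $j$ and $j+1$. The key observation is that such a swap leaves every tail sum $\sum_{k=l}^m p_{\tau(k)}$ unchanged except the one at $l = j+1$, so only $\alpha_j$ and $\alpha_{j+1}$ move. Writing $a = p_{\tau(j)}$, $b = p_{\tau(j+1)}$ and $T = \sum_{k=j+2}^m p_{\tau(k)}$, a direct computation should collapse all but two terms and yield
\[
    w_{\tau'} x - w_\tau x = D \cdot \bigl(x_{\tau(j)} - x_{\tau(j+1)}\bigr),
    \qquad
    D = \varphi(a+T) + \varphi(b+T) - \varphi(a+b+T) - \varphi(T)\,.
\]
Here $D \ge 0$, since concavity of $\varphi$ makes its increment over the length-$a$ interval starting at $T$ at least its increment over the equally long interval starting at $b+T$ further to the right. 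I expect this bookkeeping to be the main obstacle: it is the only place where the $\tau$-dependence of the coefficients must be tracked explicitly, and one must check carefully that exactly one tail sum changes and that the surviving terms recombine into the symmetric defect $D$.

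Finally I would assemble the claims by a bubble-sort argument. Because $D \ge 0$, swapping an adjacent pair with $x_{\tau(j)} > x_{\tau(j+1)}$ never decreases $w_\tau x$; repeatedly doing so drives $\tau$ toward the increasing-sort permutation without ever decreasing the value, so that permutation attains $\max_\tau w_\tau x = \WOWA(x; p, w)$, giving the first equivalence. Symmetrically, swapping any pair with $x_{\tau(j)} < x_{\tau(j+1)}$ never increases $w_\tau x$, so the decreasing-sort permutation attains $\min_\tau w_\tau x$, giving the second. The same conclusion can be read off the Choquet-integral form of WOWA with capacity $\mu(A) = \varphi(\sum_{i \in A} p_i)$, which is submodular exactly because $\varphi$ is concave; the exchange argument is merely the self-contained version of that fact.
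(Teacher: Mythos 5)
Your proposal is correct and follows essentially the same route as the paper's proof: an adjacent-transposition (bubble-sort) exchange argument, driven by the concavity of $\varphi$ that follows from $w$ being increasing, showing the increasing-sort row maximizes and the decreasing-sort row minimizes $w_\sigma x$ (with ties handled by the vanishing of the exchange difference). Your explicit computation of the defect $D$ simply spells out the inequality that the paper asserts more tersely in its case analysis, so the two proofs are the same in substance.
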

\begin{proof}
    For any permutation $\sigma$, we can create a new permutation $\sigma'$,
    with $\sigma'(i) = \sigma(i + 1)$, $\sigma'(i + 1) = \sigma(i)$ for some $i
    \in \{1, \dots, m - 1\}$ and $\sigma'(j) = \sigma(j)$ for all $j \notin \{i,
    i + 1\}$. Since $w$ is in increasing order, we know that the growth of
    $\varphi$ is monotonically decreasing over $[0, 1]$. Combined with the fact
    that $\|w_\sigma\|_1 = \varphi(1) - \varphi(0) = \|w\|_1 = 1$ for all
    $\sigma$, we get that:
    \begin{equation}\label{eq:sigma-cases}
        \begin{cases}
            w_\sigma x \ge w_{\sigma'}x & \text{if } x_{\sigma(i)} < x_{\sigma(i +
            1)} \\
            w_\sigma x \le w_{\sigma'}x & \text{if } x_{\sigma(i)} > x_{\sigma(i +
            1)} \\
            w_\sigma x = w_{\sigma'}x & \text{otherwise}
        \end{cases}
    \end{equation}
    If a permutation $\sigma$ does not put $x$ in increasing order, there
    is an $i$ such that $x_{\sigma(i)} > x_{\sigma(i + 1)}$. Thus, there
    is another permutation $\sigma'$ with $w_{\sigma'}x \ge w_{\sigma}x$.
    Consequently, one of the permutations $\sigma$ that maximizes
    $w_{\sigma}x$ must put $x$ in increasing order. Note that by the third
    case in \eqref{eq:sigma-cases}, if there are multiple permutations that
    put $x$ in increasing order, the value of $w_{\sigma}x$ is the same for all
    of them. Similarly, any $\sigma$ that puts $x$ in decreasing order minimizes
    $w_{\sigma}x$.
    \qed
\end{proof}
Using the construct in \cref{prop:ambit-check}, we can for a WOWA-based
unfairness measure, defined by weighting vectors $w$ and $p$, create a linear
ambit $\Ball[q, s; \Wt]$. As long as $w$ is in increasing order, i.e., $w =
w^\uparrow$, the membership check of this ambit, $\Wt x \le s$, is equivalent to
checking that $\WOWA(x; p, w^\uparrow) \le s$. That is, this ambit is equivalent
to a range query with the WOWA-based unfairness measure. And when checking
whether this query ambit intersects the inverted $r$-ball round $c$, we can in
principle perform $m!$ individual checks like $r - w_\sigma x < s$ (i.e.,
$w_\sigma x > r - s$),
one per row $\sigma$.\footnote{This follows from the linear ambit overlap check
described in Theorem~3.1.2 of \citet{Hetland:2019}, as well as from the
monotonicity result of \citet{Ciaccia:1998}, inserting the lower bound $r - x$
into the ambit membership predicate.}
\Cref{prop:ambit-check} shows us that we need only consider the single row
corresponding to a decreasing $x$. In other words, the overlap
check is strengthened from
$s > r - \WOWA(x; p, w^\uparrow)$
to
$s > r - \WOWA(x; p, w^\downarrow)$.

\section{Experiments}

To demonstrate the practical feasibility of the method, even without any
fine-tuning or high-effort optimization, we have tested it empirically on
synthetic and real-world data, using the basic index structure \emph{list of
clusters} (LC), as described by \citet{Chavez:2005}. Briefly, the LC partitions
the data set into a sequence of ball regions, each defined by a center, a
covering radius, and a set of member items. A search progresses by detecting
overlap with each ball in turn, potentially scanning its members for relevance.
A defining feature of LC is that the points in later buckets fall entirely
outside previous balls, so that if the query falls entirely \emph{inside} one of
the balls, the search process may be halted.

More specifically, bucket centers were chosen to maximize distance to previous
centers (heuristic \textit{p}5 of \citeauthor{Chavez:2005}), with each ball
constructed to contain the~\num{20} closest points to the center, as well as any
additional points that fall within the resulting radius.\footnote{These choices
were made based on the results of \citet{Chavez:2005}, which indicate that
\textit{p}5 yiels the best results overall, and a bucket size of~\num{20} is
a good tradeoff between filtering power and scanning time for a wide range of
data sets.}
The data sets used were:
\begin{itemize}
    \item Synthetic: \num{100000} uniformly random and clustered vectors from
        $[0,1]^{\mathrm D}$, for $\mathrm D=2,4,\dots,10$. The clustered vectors
        were constructed by first generating \num{1000} cluster centers,
        uniformly at random, and then generating \num{100} vectors per cluster,
        by adding standard Gaussian noise.
    \item Real-world: The Colors, NASA and Listeria SISAP data
        sets.\footnote{Available at \url{https://sisap.org}.}
\end{itemize}
Euclidean and Levenshtein distance were used with vectors and strings,
respectively. For the real-world data sets, the 101 first objects
were taken as queries; for the synthetic ones, queries were
generated in addition. These were used pairwise (1 and 2, 2 and 3, etc.)\@
in an OWA query with weights~1 and~3 (like the Gini coefficient). Fairest
neighbor queries ($k$FN) were run for $k=1,\dots,5$. The number of distance
computations was averaged over the~100 query object pairs.

\Cref{tab:results} shows the results. As a baseline, the number of distance
computations needed for a linear scan is listed, and the speedup for the
combined $k$FN query is shown for each $k$. For comparison, we also performed a
\emph{double} query, where a separate $k$ was found for each of the two query
objects, to ensure that the true $k$FN would be returned,\footnote{Of course,
these individual $k$ parameters would not be available when resolving a real
query, but this gives an optimistic bound for the competition.} and then two
separate $k$NN queries were performed, with the fairest neighbors found in their
intersection.
The combined search used fewer distance computations than the alternatives for
every data set and parameter setting. On average, the combined search used about
half as many distance computations as the separate queries, and a quarter of a
full linear scan.\footnote{More specifically, the average proportions, using the
geometric mean, were \SI{\propnaive}{\percent} and \SI{\propscan}{\percent},
respectively, corresponding to speedups of \num{\spdnaive} and \num{\spdscan}.}

\begin{table}[t]
\caption{Experimental results. For each of the double (two separate) and
combined queries, the speedup (where higher is better) from the number of
distance computations needed for linear scan is listed for each $k=1,\dots,5$.}
\label{tab:results}
\input{anc/data/counttab}
\end{table}

\section{Conclusions and Future Work}

Ordered weighted averages (OWA) and weighted OWAs (WOWA) may be used as a query
modality with any current metric index, when tradeoffs between multiple query
objects are needed, to find their $k$ fairest neighbors, $k$FN. They provide a
large degree of customizability, both in terms of their fairness profile and the
relative weights of different query objects, and are easy to implement. Other
monotone (un)fairness measures may also be used, though possibly with weakened
overlap checks in some cases.

Future research might look into adapting index structures, e.g., by adjusting
construction heuristics, to fair neighbor queries, and whether the requirements
for efficiency in practice are different from, say, single-object ball queries.
Generalizations of fairness might also be interesting, where one permits
negative weights for certain objects, which is straightforward for weighted sum,
but whose implementation is less obvious for WOWA. A more straightforward
extension of this work would be to test on other data sets, perhaps with higher
(intrinsic) dimensionality, using more advanced index structures.

\bibliographystyle{splncs04}
\bibliography{paper}

\end{document}